\newtheorem{theorem}{Theorem}
\newtheorem{property}{Property}
\begin{document}
	
	\begin{frontmatter}
		
		\title{Imaginarity of quantum channels: Refinement and Alternative}
		
		\author{Xiangyu Chen}
		\ead{23S012014@stu.hit.edu.cn}
		
		\author{Qiang Lei\corref{cor1}}
		\ead{leiqiang@hit.edu.cn}
		
		\cortext[cor1]{Corresponding author}
		
		\affiliation{organization={School of Mathematics},
			addressline={Harbin Institute of Technology}, 
			city={Harbin},
			postcode={150001}, 
			country={China}}
		
		\begin{abstract}
			In this paper, we introduce the framework for quantifying the imaginarity of quantum channels. Besides, an alternative framework is given together to simplify the process of verifying the condition. We present three imaginarity measures of quantum channels based on the robustness, the trace norm, and the entropy, respectively. Some properties are also given.
		\end{abstract}
		
		%%Research highlights
		
		%% Keywords
		\begin{keyword}
			%% keywords here, in the form: keyword \sep keyword
			
			%% PACS codes here, in the form: \PACS code \sep code
			
			%% MSC codes here, in the form: \MSC code \sep code
			%% or \MSC[2008] code \sep code (2000 is the default)
			quantum resource theories \sep quantum channels \sep imaginarity measures \sep alternative framework
		\end{keyword}
		
	\end{frontmatter}
	
	\section{Introduction}
	\par 
	Imaginary numbers offer a broad perspective and provide additional methodologies for modern scientific research. Quantum resource theory facilitates the establishment of new connections across various research directions. In recent years, the concept of imaginarity resource theory \cite{Hickey2018}, which has emerged alongside the development of quantum resource theory, has garnered increasing attention. Several applications can illustrate its significance. For instance, imaginarity plays a critical role in state discrimination \cite{Wu2021}. The imaginarity operation proves to be particularly advantageous for distinguishing quantum channels without the need for an auxiliary system \cite{Wu2023}. Numerous imaginarity measures have been identified for quantum states, including those based on fidelity \cite{Wu2021}, the convex roof \cite{Chen2022}, geometric considerations \cite{Wu2023}, and some induced by entropies \cite{Xu2023,Chen2024}. Some of them establish relations due to their mathematical properties \cite{Chen2024,Xue2021}. Just as quantum states warrant attention, quantum channels also merit investigation. Recent studies have begun to explore the properties of quantum channels \cite{Mani2015,Dana2017,Leditzky2018,Yuan2019,Wang2019}. The resource theories of quantum channels were first established in \cite{Liu2019}, with several subsequent theories proposed, such as coherence in quantum channels \cite{Xu2019}, entanglement in quantum channels \cite{Zhou2022}, and more \cite{Xu2021,Luo2022}. Building upon these frameworks, additional measures for quantum channels have been introduced \cite{Jin2021,Fan2022,Ye2024,Fan2024}.
	\par 
	The concept of the imaginarity of quantum channels was first introduced in \cite{Qiang2022}, and the definitions of free channels and free superchannels have been revisited recently \cite{Zanoni2024}. This indicates that the concept is of significant interest and worth considering. The structure of this paper is as follows: Similar to the coherence of quantum channels, we refine the concept of imaginarity for quantum channels and present an alternative framework that facilitates the verification of whether a given quantifier can serve as a suitable measure. We discuss the properties of imaginarity measures of quantum channels. Additionally, we propose three measures of the imaginarity of quantum channels based on the robustness, the trace norm, and the entropy, respectively.
	
	\section{Framework for quantifying the imaginarity of quantum channels}
	\par 
	Let $H_A$ and $H_B$ be two Hilbert space with dimensions $|A|$ and $|B|$, and orthonormal basis $\left\lbrace |j\rangle  \right\rbrace_{j=0}^{|A|-1}=\left\lbrace |k\rangle  \right\rbrace_{k=0}^{|A|-1} $ and $\left\lbrace |\alpha\rangle  \right\rbrace_{\alpha=0}^{|B|-1}=\left\lbrace |\beta\rangle  \right\rbrace_{\beta=0}^{|B|-1} $. We assume $|a\rangle\langle a|=E_{a,a}$ without loss of generality, that is, it happens to be the a-th diagonal. Denote $\mathcal{D}_A$ and $\mathcal{D}_B$ be the set of all density operators on $H_A$ and $H_B$, and $\mathcal{C}_{AB}$ be the set of all channels from $\mathcal{D}_A$ to $\mathcal{D}_B$. A quantum channel $\phi\in\mathcal{C}_{AB}$ can be represented by Choi matrix, 
	\begin{eqnarray*}
		J_{\phi}
		&=&\sum_{j,k}|j\rangle\langle k|\otimes \phi(|j\rangle\langle k|)
		\\
		&=&\sum_{j,k,\alpha,\beta}\phi_{j,k,\alpha,\beta}|j\rangle\langle k|\otimes |\alpha\rangle\langle \beta|
	\end{eqnarray*}
	with $\phi_{j,k,\alpha,\beta}=\langle\alpha|\phi(|j\rangle\langle k|)|\beta\rangle$.
	\par 
	The Choi state of $\phi$ is $\frac{J_{\phi}}{|A|}\triangleq\chi_{\phi}$. We call channel $\phi$ a real channel if $\chi_{\phi}$ is a real matrix, and denote $\mathcal{RC}_{AB}$ the set of all the real channels. The free channel is defined as the real channel. This is exactly the free operation in imaginarity resource theory, thus it is appropriate to define the free channel by Choi state.
	\par 
	Let $\mathcal{SC}_{ABA'B'}$ be the sets of all superchannels $\Theta$ from $\mathcal{C}_{AB}$ to $\mathcal{C}_{A'B'}$. The superchannel $\Theta$ can be represented by Choi matrix \cite{Xu2019},
	\begin{eqnarray*}
		J_{\Theta}
		&=&
		\sum_{j,k,\alpha,\beta}|j\alpha\rangle\langle k\beta|\otimes\Theta(|j\alpha\rangle\langle k\beta|)
		\\
		&=&
		\sum_{j,k,\alpha,\beta}\Theta_{j,k,\alpha,\beta,j',k',\alpha',\beta'}|j\alpha j'\alpha\rangle\langle k\beta k'\beta'|,
	\end{eqnarray*}
	where $\Theta_{j,k,\alpha,\beta,j',k',\alpha',\beta'}=\langle j'\alpha'|\Theta(|j\alpha\rangle\langle k\beta|)|k'\beta'\rangle$ with $|j\alpha\rangle$ means the tensor of the basis from systems $A$ and $B$. 
	\par 
	Similar to the case of channel, superchannel $\Theta$ also has the expression of Kraus operators $\left\lbrace M_m\right\rbrace_m $ with $\chi_{\Theta(\phi)}=\sum_m M_m\chi_{\phi}M_m^\dagger\triangleq\tilde{\Theta}(\chi_{\phi})$, where 
	\begin{eqnarray*}
		M_m=\sum_{j,j',\alpha,\alpha'}M_{m,j',j,\alpha',\alpha}|j'\alpha\rangle\langle j\alpha|
	\end{eqnarray*}
	with $\Theta_{j,k,\alpha,\beta,j',k',\alpha',\beta'}=\sum_mM_{m,j',j,\alpha',\alpha}M_{m,k',k,\beta',\beta}$.
	\par 
	Here, $\tilde{\Theta}$ means the channel corresponding to the superchannel \cite{Xu2019,Chiribella2008}. We will not distinguish the signs, $\Theta$ or $\tilde{\Theta}$, since it can be judged by the context.
	\par 
	We call a superchannel $\Theta: \mathcal{C}_{AB}\longrightarrow\mathcal{C}_{A'B'}$ a real superchannels  if $\Theta(\phi)\in\mathcal{RC}_{A'B'}$ for any channels $\phi\in\mathcal{RC}_{AB}$. We denote the set of all the real superchannels by $\mathcal{RSC}_{ABA'B'}$. The free superchannel is defined as the real superchannel.
	\par
	In \cite{Qiang2022}, the author presented a framework for quantifying the imaginarity of quantum channels. They introduced the imaginarity measure $C(\phi)\triangleq C(\chi_{\phi})$ of quantum channels should satisfy the following conditions:
	\par\noindent
	$\bm{(C_1)}$. Faithfulness:  $C(\phi)\geqslant 0$, and $C(\phi)= 0$ if and only if $\phi\in\mathcal{RC}_{AB}$;
	\par\noindent
	$\bm{(C_2)}$. Monotonicity:  $C(\phi)\geqslant C(\Theta(\phi))$ for any real superchannel $\Theta\in\mathcal{RSC}_{ABA'B'}$.
	\par 
	Actually, strong monotonicity and convexity often have a great use and a proper quantum resource theory should include them. So we add these to the conditions above, and then the imaginarity measure $C$ of quantum channels becomes more familiar and complete.
	\par\noindent
	$\bm{(C_3)}$. Strong monotonicity: Let $\Theta\in\mathcal{RSC}_{ABA'B'}$ with $\Theta(\cdot)=\sum_m M_m\cdot M_m^\dagger$, then $C(\phi)\geqslant\sum_m p_m C(\phi_m)$ where $p_m=\text{Tr}(M_m\chi_{\phi}M_m^\dagger)$ and $\chi_{\phi_m}=M_m\chi_{\phi}M_m^\dagger/p_m$
	\par\noindent
	$\bm{(C_4)}$. Convexity: For any $\left\lbrace \phi_m \right\rbrace\subset\mathcal{C}_{AB}$ and any probability distribution $\left\lbrace p_m \right\rbrace $, $C(\sum_m p_m\phi_m)\leqslant\sum_m p_m C(\phi_m)$.
	\par 
	We consider that a proper imaginarity measure of quantum channels should satisfy the condition $\bm{(C_1)}$--$\bm{(C_4)}$. In this way, we can obtain some properties of the imaginarity measure $C$. 
	\begin{property}\label{Pr1}
		Monotonicity under composition with any real channels:
		\par 
		$(a)$. Left composition: $C(\psi\circ\phi)\leqslant C(\phi)$ for any $\psi\in\mathcal{RC}_{BD}$ and any $\phi\in\mathcal{C}_{AB}$.
		\par 
		$(b)$. Right composition: $C(\phi\circ\psi)\leqslant C(\phi)$ for any $\psi\in\mathcal{RC}_{AB}$ and any $\phi\in\mathcal{C}_{BD}$.
		\par 
		$(C)$. Right composition: $C(\phi\circ\psi)=C(\phi)$ for any invertible $\psi\in\mathcal{RC}_{AB}$ and any $\phi\in\mathcal{C}_{BD}$.
	\end{property}
	\begin{proof}
		We only prove $(a)$, because $(b)$ is analogous. $(c)$ can be obtained from $(a)$ and $(b)$.
		\par 
		For $\psi\in\mathcal{RC}_{AB}$ and $\phi\in\mathcal{C}_{BD}$, we consider a superchannel $\Theta_{\psi}$ with $\Theta_{\psi}=\psi\circ\phi$. When superchannel $\Theta_{\psi}$ acts on any real channels, it is impossible to generate the imaginarity resource since the channel $\psi$ is real. It can be seen that the superchannel $\Theta_{\psi}$ is a real superchannel from the definition. The result holds due to the condition $\bm{(C_2)}$.
	\end{proof}
	\begin{property}\label{Pr2}
		The combination of imaginarity measures is still an imaginarity measure, that is, for any $\left\lbrace C_j\right\rbrace $ and any probability distribution $\left\lbrace q_j \right\rbrace $ with $q_j\geqslant 0$,  $C\triangleq\sum_jq_jC_j$ is an imaginarity measure.
	\end{property}
	\begin{proof}
		It is obvious that $C(\phi)\geqslant 0$. Due to $q_j\geqslant 0$, $C(\phi)= 0$ is equivalent to every $C_j(\phi)=0$, and it indicate $\phi\in\mathcal{RC}$. For any real superchannel $\Theta$, we have $C(\phi)=\sum_jq_jC_j(\phi)\geqslant \sum_jq_jC_j(\Theta(\phi))=C(\Theta(\phi))$.
		\par 
		For $\Theta\in\mathcal{RSC}$ with $\Theta(\cdot)=\sum_m M_m\cdot M_m^\dagger$, the strong monotonicity holds:
		\begin{eqnarray*}
			&&C(\phi)
			\\
			&=&\sum_jq_jC_j(\phi)
			\\
			&\geqslant&\sum_jq_j\sum_mp_mC_j(\phi_m)
			\\
			&=&\sum_mp_m\sum_jq_jC_j(\phi_m)
			\\
			&=&\sum_mp_mC(\phi_m).
		\end{eqnarray*}
		\par 
		For any $\left\lbrace \phi_m \right\rbrace\subset\mathcal{C}_{AB}$ and any probability distribution $\left\lbrace p_m \right\rbrace $, the convexity holds:
		\begin{eqnarray*}
			&&C(\sum_mp_m\phi_m)
			\\
			&=&\sum_jq_jC_j(\sum_mp_m\phi_m)
			\\
			&\leqslant&\sum_jq_j\sum_mp_mC_j(\phi_m)
			\\
			&=&\sum_mp_m\sum_jq_jC_j(\phi_m)
			\\
			&=&\sum_mp_mC(\phi_m).
		\end{eqnarray*}
		\par 
		Therefore, $C$ is an imaginarity measure.
	\end{proof}
	\begin{property}\label{Pr3}
		If $C$ is an imaginarity measure for channel $\phi$, then the quantifier $\hat{C}$ as
		\begin{eqnarray*}
			\hat{C}(\phi)\triangleq\min_{\left\lbrace q_j, \phi_j\right\rbrace }\sum_jq_jC(\phi_j),
		\end{eqnarray*} 
		is a proper measure for $\phi$, where $\min_{\left\lbrace q_j, \phi_j\right\rbrace }$ represents the min of all possible channels decompositions as $\phi=\sum_jq_j\phi_j$ with probability distribution $q_j\geqslant 0$.
	\end{property}
	\begin{proof}
		It can be seen that $\hat{C}(\phi)\geqslant 0$. When $\hat{C}(\phi)=0$, we can take  the best decomposition of $\phi$, expressed as $\phi=\sum_kq_k\phi_k$, so that $\hat{C}(\phi)=\sum_kq_kC(\phi_k)$. We can get $C(\phi_k)=0$ since $q\geqslant 0$, that is, $\phi_k\in\mathcal{RC}$, thus $\phi\in\mathcal{RC}$, the faithfulness holds. 
		\par 
		For any probability distribution $\left\lbrace q_j\right\rbrace $, if we suppose every $\phi_j=\phi$, we can have $\sum_jq_jC(\phi_j)= C(\phi)$. Thus, it follows that $\hat{C}(\phi)=\min_{\left\lbrace q_j, \phi_j\right\rbrace }\sum_jq_jC(\phi_j)\leqslant C(\phi)$. For any quantum channel $\phi$, we take the best decomposition of $\phi$, expressed as $\phi=\sum_kq_k\phi_k$, so that $\hat{C}(\phi)=\sum_kq_kC(\phi_k)$. For $\Theta\in\mathcal{RSC}$ with $\Theta(\cdot)=\sum_m M_m\cdot M_m^\dagger$, we denote $a_{k,m}=\text{Tr}(M_m\chi_{\phi_k}M_m^{\dagger})$, $p_m=\text{Tr}(M_m\chi_{\phi}M_m^{\dagger})$, then 
		\begin{eqnarray*}
			&&\hat{C}(\phi)
			\\
			&=&\sum_kq_kC(\phi_k)
			\\
			&\geqslant& \sum_kq_k\sum_ma_{k,m}C(\frac{M_m\chi_{\phi_k}M_m^{\dagger}}{a_{k,m}})
			\\
			&=&\sum_mp_m\sum_k\frac{q_ka_{k,m}}{p_m}C(\frac{M_m\chi_{\phi_k}M_m^{\dagger}}{a_{k,m}})
			\\
			&\geqslant&\sum_mp_mC(\sum_k\frac{q_ka_{k,m}}{p_m}\frac{M_m\chi_{\phi_k}M_m^{\dagger}}{a_{k,m}})
			\\
			&=&\sum_mp_mC(\frac{\sum_kq_kM_m\chi_{\phi_k} M_m^{\dagger}}{p_m})
			\\
			&=&\sum_mp_mC(\frac{M_m\chi_{\phi} M_m^{\dagger}}{p_m})
			\\
			&\geqslant&\sum_mp_m\hat{C}(\frac{M_m\chi_{\phi} M_m^{\dagger}}{p_m}).
		\end{eqnarray*}
		The first inequality follows from the strong monotonicity of $C$. The second inequality arises from the convexity of $C$ and the fact that $\sum_k\frac{q_ka_{k,m}}{p_m}=\frac{\sum_kq_k\text{Tr}(M_m\chi_{\phi_k}M_m^{\dagger})}{\text{Tr}(M_m\chi_{\phi}M_m^{\dagger})}=1$. The third inequality results from $\hat{C}(\phi)\leqslant C(\phi)$.
		\par 
		To verify the convexity of $\hat{C}$, for simplicity and without loss of generality, we consider only two quantum channels $\phi_1$ and $\phi_2$. We define $\psi=a\phi_1+b\phi_2$ with $a+b=1$. We take the best decomposition of $\phi_1$ and $\phi_2$ respectively as $\phi_1=\sum_mq_{1,m}\phi_{1,m}$ and $\phi_2=\sum_nq_{2,n}\phi_{2,n}$ so that the following equations hold:
		\begin{eqnarray*}
			\begin{cases}
				\hat{C}(\phi_1)=\sum_mq_{1,m}C(\phi_{1,m}),
				\\
				\hat{C}(\phi_2)=\sum_nq_{2,n}C(\phi_{2,n}).
			\end{cases}
		\end{eqnarray*}
		Then, $a\hat{C}(\phi_1)+b\hat{C}(\phi_2)=\sum_maq_{1,m}C(\phi_{1,m})+\sum_nbq_{2,n}C(\phi_{2,n})$.
		\par 
		For ease of writing, if $m\neq n$, we supplement the elements $\left\lbrace 0, \bm{0} \right\rbrace $ to the smaller set so that both sets have the same number of elements. This gives $a\hat{C}(\phi_1)+b\hat{C}(\phi_2)=\sum_maq_{1,m}C(\phi_{1,m})+bq_{2,m}C(\phi_{2,m})$. Next, we introduce the new symbols $\left\lbrace q_{2m-1}, \psi_{2m-1}\right\rbrace$ and $\left\lbrace q_{2m}, \psi_{2m}\right\rbrace$ to represent $\left\lbrace aq_{1,m}, \phi_{1,m}\right\rbrace $ and $\left\lbrace bq_{2,m}, \phi_{2,m}\right\rbrace $ respectively. Then we have 
		\begin{eqnarray*}
			&&a\hat{C}(\phi_1)+b\hat{C}(\phi_2)
			\\
			&=&\sum_mq_{2m-1}C(\psi_{2m-1})+q_{2m}C(\psi_{2m})
			\\
			&\geqslant&\min_{q_j, \psi_j}\sum_jq_jC(\psi_j)
			\\
			&=&\hat{C}(\psi)=\hat{C}(a\phi_1+b\phi_2).
		\end{eqnarray*}
		\par 
		In this way, we confirm that $\hat{C}$ is a valid imaginarity measure.
	\end{proof}
	
	\section{Alternative framework for the imaginarity of quantum channels}
	\par 
	\par 
	In Section 2, we introduce the framework. However, conditions $\bm{(C_3)}$ and $\bm{(C_4)}$ can sometimes be difficult to verify. In coherence resource theory, to simplify the process of verifying whether a quantifier qualifies as a coherence measure, an alternative framework for quantifying coherence has been proposed \cite{Yu2016}. Similar alternative frameworks can also be established in many other resource theories, such as imaginarity resource theory \cite{Xue2021}, and coherence of quantum channels \cite{Kong2022}. Inspired by \cite{Kong2022}, we Introduce the following condition:
	\par\noindent
	$\bm{(C_5)}$. Additivity: $C(\phi)=p_1C(\phi_1)+p_2C(\phi_2)$ for any $p_1$, $p_2$ satisfied $p_1+p_2=1$, any $\phi_1\in\mathcal{C}_{AB_1}$ and $\phi_2\in\mathcal{C}_{AB_2}$, where $\phi=p_1\phi_1\oplus p_2\phi_2$, $\phi\in\mathcal{C}_{AB}$ with $|B|=|B_1|+|B_2|$.
	\par 
	When conditions $\bm{(C_1)}$ and $\bm{(C_2)}$ hold, the conditions $\bm{(C_3)}$ and $\bm{(C_4)}$ are equivalent to the condition $\bm{(C_5)}$. To prove this, we first state the following theorem:
	\begin{theorem}\label{Th1}
		$\bm{(C_3)}$ and $\bm{(C_4)}$ give rise to $\bm{(C_5)}$ when condition $\bm{(C_1)}$ and $\bm{(C_2)}$ holds.
	\end{theorem}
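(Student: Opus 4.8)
The plan is to sandwich $C(\phi)$ between $p_1C(\phi_1)+p_2C(\phi_2)$ from above via convexity $\bm{(C_4)}$ and from below via strong monotonicity $\bm{(C_3)}$, and then combine the two estimates. Throughout, write $H_B=H_{B_1}\oplus H_{B_2}$, let $V_m$ be the inclusion isometry of $H_{B_m}$ into $H_B$ and $P_{B_m}=V_mV_m^\dagger$ the projection onto the $m$-th block; in the fixed real bases these are real operators. The preliminary observation, obtained by writing out the Choi matrix, is that $\chi_\phi$ for $\phi=p_1\phi_1\oplus p_2\phi_2$ is block-diagonal on the $B$-register, so that $(I_A\otimes P_{B_m})\,\chi_\phi\,(I_A\otimes P_{B_m})=p_m\,\chi_{\iota_m\circ\phi_m}$, where $\iota_m\in\mathcal{RC}_{B_mB}$ is the real inclusion channel $\sigma\mapsto V_m\sigma V_m^\dagger$, hence $\iota_m\circ\phi_m\in\mathcal{C}_{AB}$ and $\text{Tr}\big[(I_A\otimes P_{B_m})\chi_\phi(I_A\otimes P_{B_m})\big]=p_m$.

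The first real step is the auxiliary identity $C(\iota_m\circ\phi_m)=C(\phi_m)$, which is what lets statements living in $\mathcal{C}_{AB}$ (where block projections and direct sums make sense) be read back in the native spaces $\mathcal{C}_{AB_1}$, $\mathcal{C}_{AB_2}$ of $\phi_1$ and $\phi_2$. The inequality ``$\leqslant$'' is immediate from Property~\ref{Pr1}$(a)$ since $\iota_m$ is real. For ``$\geqslant$'', I would introduce the channel $\psi_m\in\mathcal{RC}_{BB_m}$, $\psi_m(\tau)=V_m^\dagger\tau V_m+\text{Tr}\big[(I_B-P_{B_m})\tau\big]\,\sigma_0$ for a fixed real state $\sigma_0\in\mathcal{D}_{B_m}$, check that it is CPTP with real Kraus operators, and verify $\psi_m\circ\iota_m=\text{id}_{B_m}$; then $\psi_m\circ(\iota_m\circ\phi_m)=\phi_m$, and Property~\ref{Pr1}$(a)$ applied to the real channel $\psi_m$ yields $C(\phi_m)\leqslant C(\iota_m\circ\phi_m)$.

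With this in hand, the upper bound comes from writing $\phi$ as an honest convex combination of channels in $\mathcal{C}_{AB}$, namely $\phi=p_1\,(\iota_1\circ\phi_1)+p_2\,(\iota_2\circ\phi_2)$, so that $\bm{(C_4)}$ and the auxiliary identity give $C(\phi)\leqslant p_1C(\iota_1\circ\phi_1)+p_2C(\iota_2\circ\phi_2)=p_1C(\phi_1)+p_2C(\phi_2)$. The lower bound comes from the superchannel $\Theta$ that post-composes a channel with the real block-dephasing channel $\tau\mapsto\sum_m P_{B_m}\tau P_{B_m}$; its Choi-level action is $\chi\mapsto\sum_m M_m\chi M_m^\dagger$ with $M_m=I_A\otimes P_{B_m}$, and since the $M_m$ are real, $\Theta\in\mathcal{RSC}_{ABAB}$. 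By the preliminary observation the branches produced by $\bm{(C_3)}$ for this $\Theta$ are exactly $\iota_m\circ\phi_m$ occurring with probabilities $\text{Tr}(M_m\chi_\phi M_m^\dagger)=p_m$, so $C(\phi)\geqslant p_1C(\iota_1\circ\phi_1)+p_2C(\iota_2\circ\phi_2)=p_1C(\phi_1)+p_2C(\phi_2)$. Combining the two bounds gives $C(\phi)=p_1C(\phi_1)+p_2C(\phi_2)$, which is $\bm{(C_5)}$.

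The part I expect to require the most care is the auxiliary identity $C(\iota_m\circ\phi_m)=C(\phi_m)$ — in particular exhibiting the resetting channel $\psi_m$ as a genuine real channel that left-inverts $\iota_m$ — together with the bookkeeping that the $\bm{(C_3)}$-branches for the block-dephasing superchannel are indeed $\iota_m\circ\phi_m$ with the probabilities $p_m$. Everything else is a direct application of $\bm{(C_2)}$ (through Property~\ref{Pr1}), $\bm{(C_3)}$, and $\bm{(C_4)}$; note that $\bm{(C_1)}$ is not actually needed for this implication.
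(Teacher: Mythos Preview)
Your proposal is correct and follows essentially the same three-part strategy as the paper: your $\iota_m\circ\phi_m$ is precisely the paper's $\tilde{\phi}_m$, your block-dephasing Kraus operators $M_m=I_A\otimes P_{B_m}$ coincide with the paper's $Q_m$, and your auxiliary identity $C(\iota_m\circ\phi_m)=C(\phi_m)$ is the paper's Step~3. The only cosmetic difference is that for the reverse inequality in that identity you exhibit a single resetting channel $\psi_m$ (projection plus a trace-and-reprepare term) and invoke Property~\ref{Pr1}$(a)$, whereas the paper writes down an explicit family of shift-type Kraus operators $\{K_n^2\}$ for a real superchannel and applies $\bm{(C_2)}$ directly; both constructions achieve the same left-inverse on the relevant subspace, and your observation that $\bm{(C_1)}$ is not actually used is correct.
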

	\begin{proof}
		\par\quad\par\noindent
		\textbf{STEP 1, obtain: $\bm{C(\phi)\geqslant p_1C(\tilde{\phi}_1)+p_2C(\tilde{\phi}_2)}$}.
		\par 
		Let $Q_1=|0\rangle\langle 0|+\cdots+||B_1|-1\rangle\langle |B_1|-1|+||B|\rangle\langle |B||+\cdots+||B|+|B_1|-1\rangle\langle |B|+|B_1|-1|+|(|A|-1)|B|\rangle\langle (|A|-1)|B||+\cdots+|(|A|-1)|B|+|B_1|-1\rangle\langle (|A|-1)|B|+|B_1|-1|$, 
		\par 
		$Q_2=||B_1|\rangle\langle |B_1||+\cdots+||B|-1\rangle\langle |B|-1|+||B|+|B_1|\rangle\langle |B|+|B_1||+\cdots+||B|+|B|-1\rangle\langle |B|+|B|-1|+|(|A|-1)|B|+|B_1|\rangle\langle (|A|-1)|B|+|B_1||+\cdots+|(|A|-1)|B|+|B|-1\rangle\langle (|A|-1)|B|+|B|-1|$. 
		\par 
		We can give an intuitive expression.
		\begin{eqnarray*}
			Q_1=
			\overbrace{\begin{pmatrix}
					I_{|B_1|} & & & & \\
					& O_{|B_2|} & & & \\
					& & \ddots & & \\
					& & & I_{|B_1|} & \\
					& & & & O_{|B_2|}
			\end{pmatrix}}^{|A||B|}
			\\
			=\bigoplus\limits_{j=0}^{|A|-1}
			\begin{pmatrix}
				I_{|B_1|} & \\
				& O_{|B_2|}
			\end{pmatrix},
		\end{eqnarray*}
		\begin{eqnarray*}
			Q_2=
			\underbrace{\begin{pmatrix}
					O_{|B_1|} & & & & \\
					& I_{|B_2|} & & & \\
					& & \ddots & & \\
					& & & O_{|B_1|} & \\
					& & & & I_{|B_2|}
			\end{pmatrix}}_{|A||B|}
			\\
			=\bigoplus\limits_{j=0}^{|A|-1}
			\begin{pmatrix}
				O_{|B_1|} & \\
				& I_{|B_2|}
			\end{pmatrix},
		\end{eqnarray*}
		where $I_a$ and $O_b$ denote identity matrix and zero matrix with dimensions $a$ and $b$ respectively.
		\par 
		Given that $Q_1^{\dagger}Q_1+Q_2^{\dagger}Q_2=I_{|A||B|}$ and $Q_1Q_2=O_{|A||B|}$, let $\Theta(\cdot)=Q_1\cdot Q_1^{\dagger}+Q_2\cdot Q_2^{\dagger}$. It follows that $\Theta\in\mathcal{RSC}_{ABAB}$.
		\par 
		Under the requirement of $\bm{(C_5)}$, the Choi state of $\phi$ is
		\begin{eqnarray*}
			&&\chi_{\phi}
			\\
			&=&\sum_{j,k}|j\rangle\langle k|\otimes \phi(|j\rangle\langle k|)/|A|
			\\
			&=&\sum_{j,k}|j\rangle\langle k|\otimes \big[ p_1\phi_1(|j\rangle\langle k|)\oplus p_2\phi_2(|j\rangle\langle k|\big] /|A|.
		\end{eqnarray*}
		Thus, 
		\begin{eqnarray*}
			Q_1\chi_{\phi}Q_1^{\dagger}=\sum_{j,k}|j\rangle\langle k|\otimes \big[ p_1\phi_1(|j\rangle\langle k|)\oplus \bm{0}_2(|j\rangle\langle k|\big] /|A|.
		\end{eqnarray*}
		It is clear that $\text{Tr}(Q_1\chi_{\phi}Q_1^{\dagger})$ is exactly $p_1$. Let $\chi_{\tilde{\phi}_1}=Q_1\chi_{\phi}Q_1^{\dagger}/p_1=\sum_{j,k}|j\rangle\langle k|\otimes \big[ \phi_1(|j\rangle\langle k|)\oplus \bm{0}_2(|j\rangle\langle k|\big] /|A|$. Similarly we can obtain $\chi_{\tilde{\phi}_2}$. From the condition $\bm{(C_3)}$, we have:
		\begin{eqnarray}\label{1}
			C(\phi)=C(\chi_{\phi})
			&\geqslant& p_1C(\chi_{\tilde{\phi}_1})+p_2C(\chi_{\tilde{\phi}_2}) \nonumber
			\\
			&=&p_1C(\tilde{\phi}_1)+p_2C(\tilde{\phi}_2).
		\end{eqnarray}
		\par\noindent
		\textbf{STEP 2, obtain: $\bm{C(\phi)\leqslant p_1C(\tilde{\phi}_1)+p_2C(\tilde{\phi}_2)}$}.
		\par 
		For any $\rho\in\mathcal{D}_A$, $\rho=\sum_{j,k}\rho_{j,k}|j\rangle\langle k|$, any $\phi\in\mathcal{C}_{AB}$, $\phi(\rho)$ can be represented by $J_{\phi}$, in order to do this, $\rho$ and $J_{\phi}$ need to be tensor index realigned: 
		\begin{eqnarray*}
			\begin{cases}
				J'_{\phi}=\sum_{j,k}\langle jk|\otimes \phi(|j\rangle\langle k|),
				\\
				\rho'=\sum_{j,k}\rho_{j,k}|jk\rangle,
			\end{cases}
		\end{eqnarray*}
		then $\phi(\rho)=J'_{\phi}\rho'=\sum_{j,k}\rho_{j,k}\phi(|j\rangle\langle k|)$. Under the requirement of $(C_5)$, for any $\rho$ we have
		\begin{eqnarray*}
			&&\phi(\rho)
			\\
			&=&\sum_{j,k}\rho_{j,k}\phi(|j\rangle\langle k|)
			\\
			&=&\sum_{j,k}\rho_{j,k}\big[p_1\phi_1(|j\rangle\langle k|)\oplus p_2\phi_2(|j\rangle\langle k|)\big]
			\\
			&=&p_1\sum_{j,k}\rho_{j,k}\big[\phi_1(|j\rangle\langle k|)\oplus \bm{0}_2(|j\rangle\langle k|)\big]
			\\
			&&+p_2\sum_{j,k}\rho_{j,k}\big[\bm{0}_1(|j\rangle\langle k|)\oplus \phi_2(|j\rangle\langle k|)\big]
			\\
			&=&p_1\tilde{\phi}_1(\rho)+p_2\tilde{\phi}_2(\rho),
		\end{eqnarray*}
		that is $\phi=p_1\tilde{\phi}_1+p_2\tilde{\phi}_2$. From $\bm{(C_4)}$, it follows that
		\begin{eqnarray}\label{2}
			C(\phi)\leqslant p_1C(\tilde{\phi}_1)+p_2C(\tilde{\phi}_2).
		\end{eqnarray}
		\par\noindent
		\textbf{STEP 3, obtain $\bm{(C_5)}$ since $\bm{C(\phi_j)=C(\tilde{\phi}_j)}$, $\bm{j=1,2}$}. 
		\par 
		From inequalities (\ref{1}) and (\ref{2}) above, we obtain $C(\phi)=p_1C(\tilde{\phi}_1)+p_2C(\tilde{\phi}_2)$. Next, we need to prove that $C(\phi_1)=C(\tilde{\phi}_1)$, which requires us to show that $C(\chi_{\phi_1})=C(\chi_{\tilde{\phi}_1})$. 
		\par 
		First, we specify the range of indices $i$ and $j$, let $i=0, 1, \cdots, |B|-1$, $j=0, 1, \cdots, |B_1|-1$. Let $K_0^1$ satisfies $\langle i|K_0^1|j\rangle=\delta_{i,j}$. An intuitive expression for $K_0^1$ is 
		\begin{eqnarray*}
			K_0^1=\left.\begin{pmatrix}
				I_{|B_1|} \\
				\hat{O}
			\end{pmatrix}\right\rbrace |B|, 
		\end{eqnarray*}
		where $\hat{O}$ denotes the zero matrix with dimension $|B_2|\times |B_1|$.
		\par 
		Let $K_n^2$ satisfy $\langle j|K_n^2|i\rangle=\delta_{i,j+n|B_1|}$ for $n=0, 1, \cdots,$ $ \left\lceil |B_2|/|B_1|\right\rceil$, where $\left\lceil a\right\rceil$ means the ceil of $a$.
		\par 
		Let
		\begin{eqnarray*}
			\mathcal{K}_0^1=\bigoplus_{j=0}^{|A|-1}K_0^1=\begin{pmatrix}
				K_0^1 & & \\
				& \ddots & \\
				& & K_0^1
			\end{pmatrix}, 
			\\
			\mathcal{K}_n^2=\bigoplus_{j=0}^{|A|-1}K_n^2=\begin{pmatrix}
				K_n^2 & & \\
				& \ddots & \\
				& & K_n^2
			\end{pmatrix}.
		\end{eqnarray*}
		\par 
		 We have $(\mathcal{K}_0^1)^{\dagger}\mathcal{K}_0^1=I_{|B_1|}$ and $\sum_{j=0}^{\left\lceil |B_2|/|B_1|\right\rceil}(\mathcal{K}_n^2)^{\dagger}\mathcal{K}_n^2=I_{|B|}$. Thus $\Theta_1\in\mathcal{RSC}_{AB_1AB}$ with $\Theta_1(\cdot)=\mathcal{K}_0^1\cdot(\mathcal{K}_0^1)^{\dagger}$,  $\Theta_2\in\mathcal{RSC}_{ABAB_1}$ with $\Theta_2(\cdot)=\sum_{j=0}^{\left\lceil |B_2|/|B_1|\right\rceil}\mathcal{K}_n^2\cdot (\mathcal{K}_n^2)^{\dagger}$. It can be seen that $\Theta_1(\chi_{\phi_1})=\chi_{\tilde{\phi}_1}$ and $\Theta_2(\chi_{\tilde{\phi}_1})=\chi_{\phi_1}$. 
		\par 
		Combining with $\bm{(C_2)}$, the above implies that the following inequality holds: 
		\begin{eqnarray*}
			&&C(\chi_{\phi_1})
			\\
			&=&C(\Theta_2(\chi_{\tilde{\phi}_1}))
			\\
			&\leqslant& C(\chi_{\tilde{\phi}_1})
			\\
			&=&C(\Theta_1(\chi_{\phi_1}))
			\\
			&\leqslant& C(\chi_{\phi_1}).
		\end{eqnarray*}
		As a result, $C(\chi_{\phi_1})=C(\chi_{\tilde{\phi}_1})$. It follows from a similar discussion that $C(\chi_{\phi_2})=C(\chi_{\tilde{\phi}_2})$.
		\par 
		Up to now, we prove $\bm{(C_5)}$ holds: $C(\phi)=p_1C(\phi_1)+p_2C(\phi_2)$.
	\end{proof}
	\par 
	Under the requirement of $\bm{(C_5)}$, we have $\chi_{\phi_1}$ and $\chi_{\phi_2}$ below.
	\begin{eqnarray*}
		\begin{cases}
			\chi_{\phi_1}=\sum_{j,k}|j\rangle\langle k|\otimes\phi_1(|j\rangle\langle k|)/|A|,
			\\
			\chi_{\phi_2}=\sum_{j,k}|j\rangle\langle k|\otimes\phi_2(|j\rangle\langle k|)/|A|.
		\end{cases}
	\end{eqnarray*}
	We also define $\chi_{\tilde{\phi}_1}$ and $\chi_{\tilde{\phi}_2}$ below. 
	\begin{eqnarray*}
		\begin{cases}
			\chi_{\tilde{\phi}_1}=\sum_{j,k}|j\rangle\langle k|\otimes \big[ \phi_1(|j\rangle\langle k|)\oplus \bm{0}_2(|j\rangle\langle k|\big] /|A|,
			\\
			\chi_{\tilde{\phi}_2}=\sum_{j,k}|j\rangle\langle k|\otimes \big[ \bm{0}_1(|j\rangle\langle k|)\oplus \phi_2(|j\rangle\langle k|\big] /|A|.
		\end{cases}
	\end{eqnarray*}
	Thus the following property holds.
	\par
	Next, we prove the reverse result:
	\begin{theorem}\label{Th2}
		$\bm{(C_5)}$ gives rise to $\bm{(C_3)}$ and $\bm{(C_4)}$ when condition $\bm{(C_1)}$ and $\bm{(C_2)}$ holds.
	\end{theorem}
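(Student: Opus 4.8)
The plan is to derive $\bm{(C_4)}$ and $\bm{(C_3)}$ from the additivity $\bm{(C_5)}$ by exhibiting, in each case, a real superchannel that turns an auxiliary direct-sum channel into the channel of interest, and then applying monotonicity $\bm{(C_2)}$; condition $\bm{(C_1)}$ is used only to ensure that $C$ is a well-defined function of the channel. A preliminary step is to observe that $\bm{(C_5)}$ iterates to finitely many summands: writing $\bigoplus_m p_m\phi_m = p_1\phi_1\oplus(1-p_1)\bigl(\bigoplus_{m\geqslant 2}\tfrac{p_m}{1-p_1}\phi_m\bigr)$ and inducting on the number of terms gives $C\bigl(\bigoplus_m p_m\phi_m\bigr)=\sum_m p_m C(\phi_m)$.

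For $\bm{(C_4)}$: given $\phi=\sum_m p_m\phi_m$ with all $\phi_m\in\mathcal{C}_{AB}$, I would form the direct-sum channel $\hat\phi=\bigoplus_m p_m\phi_m$, an element of $\mathcal{C}_{A\hat B}$ whose output space $\hat B$ is a direct sum of one copy of $B$ per index $m$; by the iterated additivity, $C(\hat\phi)=\sum_m p_m C(\phi_m)$. Let $\mathcal{M}\colon\mathcal{D}_{\hat B}\to\mathcal{D}_B$ be the channel $\mathcal{M}(\sigma)=\sum_m P_m\sigma P_m^\dagger$, where $P_m$ sends the $m$-th copy of $B$ inside $\hat B$ identically onto $B$ and annihilates the others; since $\sum_m P_m^\dagger P_m=I_{\hat B}$ it is CPTP, and since the $P_m$ have $0/1$ entries it is real. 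Post-composition with $\mathcal{M}$ defines a real superchannel $\Theta$ with $\Theta(\hat\phi)=\phi$, so $\bm{(C_2)}$ yields $C(\phi)=C(\Theta(\hat\phi))\leqslant C(\hat\phi)=\sum_m p_m C(\phi_m)$.

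For $\bm{(C_3)}$: let $\Theta(\cdot)=\sum_m M_m\cdot M_m^\dagger\in\mathcal{RSC}_{ABA'B'}$, with $p_m=\mathrm{Tr}(M_m\chi_\phi M_m^\dagger)$ and $\chi_{\phi_m}=M_m\chi_\phi M_m^\dagger/p_m$. I would enlarge the output register to $\hat B'$, a direct sum of one copy of $B'$ per index $m$, and set $\hat M_m=V_m M_m$, where $V_m$ acts as the identity on $A'$ and isometrically embeds $B'$ into its $m$-th copy inside $\hat B'$. Then $\sum_m\hat M_m^\dagger\hat M_m=\sum_m M_m^\dagger M_m=I$ and the $\hat M_m$ are real, so $\hat\Theta(\cdot)=\sum_m\hat M_m\cdot\hat M_m^\dagger$ is a real superchannel with $\hat\Theta(\chi_\phi)=\bigoplus_m M_m\chi_\phi M_m^\dagger=\bigoplus_m p_m\chi_{\phi_m}=\chi_{\hat\phi}$, where $\hat\phi=\bigoplus_m p_m\phi_m$. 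Combining $\bm{(C_2)}$ with the iterated additivity gives $C(\phi)\geqslant C(\hat\Theta(\phi))=C(\hat\phi)=\sum_m p_m C(\phi_m)$.

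The main obstacle I anticipate is not the chain of (in)equalities but the bookkeeping that certifies $\Theta$ and $\hat\Theta$ as genuine real superchannels in the sense of Section 2 rather than merely real CPTP maps on Choi states: one must check that the modified Kraus operators respect the tensor-index structure demanded of superchannels (trivial action on the input/output $A$-registers and consistent treatment of the $\alpha$-indices), and that each $M_m\chi_\phi M_m^\dagger$ is a legitimate subnormalised Choi state so that $\phi_m$ is defined at all --- the latter already being built into the hypothesis of $\bm{(C_3)}$. Since both constructions only relabel or embed the output $B$-register, and post-composition with a real channel is automatically a real superchannel, I expect these verifications to be routine.
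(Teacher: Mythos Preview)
Your proposal follows essentially the same route as the paper's proof. For $\bm{(C_3)}$ the paper tensors with an auxiliary register $S$ and uses shift operators $U_n$ so that the Kraus operators $M_n\otimes U_n$ send $\chi_\phi\otimes|0\rangle\langle 0|$ to $\bigoplus_n p_n\chi_{\phi_n}$, then invokes $\bm{(C_5)}$ and $\bm{(C_2)}$; for $\bm{(C_4)}$ it collapses $\bigoplus_m p_m\chi_{\phi_m}$ to the mixture via Kraus operators $I\otimes|0\rangle\langle m|$. Your isometric embeddings $V_m$ and projections $P_m$ are precisely these maps rewritten in direct-sum rather than tensor-product language, and your iteration of $\bm{(C_5)}$ to finitely many summands matches how the paper reads off $C(\bigoplus_n p_n\chi_{\phi_n})=\sum_n p_n C(\chi_{\phi_n})$; the only cosmetic difference is that you avoid the extra $\otimes|0\rangle\langle 0|$ step.

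One caveat: your assertion that ``the $\hat M_m$ are real'' is not justified, since a real superchannel (defined only by sending real channels to real channels) need not have real Kraus operators in an arbitrary Kraus decomposition, and $\bm{(C_3)}$ is stated for a given decomposition. What is actually needed is that $\hat\Theta$ maps real channels to real channels. The paper asserts the corresponding fact ($\Theta^{ABS}\in\mathcal{RSC}$) without proof, so your argument is neither more nor less complete than the original on this point, but you should not rely on realness of the individual $\hat M_m$.
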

	\begin{proof}
		\par\quad\par\noindent
		\textbf{prove $\bm{(C_3)}$}. 
		\par 
		For $\Theta\in\mathcal{RSC}_{ABA'B'}$ with  $\Theta(\cdot)=\sum_m M_m\cdot M_m^\dagger$, we introduce an auxiliary system $S$ with dimension $N$ and orthonormal basis $\left\lbrace |n\rangle\right\rbrace_{n=0}^{N-1} $. The system $AB$, formed by Choi state and the auxiliary system $S$, constitutes a combined system $ABS$, whose basis is $\left\lbrace |j\alpha\rangle\otimes|n\rangle\right\rbrace $. The entire system is initially in the state $\chi_{\phi}\otimes |0\rangle\langle 0|\triangleq\chi_{\phi}^{ABS}$. 
		\par 
		Let $U_n=\sum_{k=0}^{N-1}|(k+n) \text{ mod } N\rangle\langle k|$, then $\Theta^{ABS}(\cdot)\in\mathcal{RSC}_{ABSA'B'S}$ with $\Theta^{ABS}(\cdot)=\sum_n(M_n\otimes U_n)\cdot (M_n\otimes U_n)^{\dagger}$. Thus
		\begin{eqnarray*}
			&&\Theta^{ABS}(\chi_{\phi}^{ABS})
			\\
			&=&\sum_n(M_n\otimes U_n)(\chi_{\phi}\otimes |0\rangle\langle 0|)(M_n\otimes U_n)^{\dagger}
			\\
			&=&\sum_nM_n\chi_{\phi}M_n^{\dagger}\otimes |n\rangle\langle n|
			\\
			&=&\sum_np_n\chi_{\phi_n}\otimes |n\rangle\langle n|,
		\end{eqnarray*}
		where $p_n=\text{Tr}(M_n\chi_{\phi}M_n^{\dagger})$ and $\chi_{\phi_n}=M_n\chi_{\phi}M_n^{\dagger}/p_n$.
		\par 
		It can be seen that $\sum_np_n=1$, $\phi_n\in\mathcal{C}_{AB_n}$ with $|B|=\sum_n|B_n|$. From $\bm{(C_5)}$, we have
		\begin{eqnarray*}
			C(\chi_{\phi}^{ABS})=C(\chi_{\phi}\otimes |0\rangle\langle 0|)=C(\chi_{\phi}),
		\end{eqnarray*}
		and
		\begin{eqnarray*}
			&&C(\Theta^{ABS}(\chi_{\phi}^{ABS}))
			\\
			&=&C(\sum_np_n\chi_{\phi_n}\otimes |n\rangle\langle n|)
			\\
			&=&C(\bigoplus_np_n\chi_{\phi_n})
			\\
			&=&\sum_np_nC(\chi_{\phi_n}).
		\end{eqnarray*}
		From $\bm{(C_2)}$ and above, we can obtain
		\begin{eqnarray*}
			C(\chi_{\phi})\geqslant\sum_np_nC(\chi_{\phi_n}),
		\end{eqnarray*}
		that is, condition $\bm{(C_3)}$ holds.
		\par\noindent
		\textbf{prove $\bm{(C_4)}$}. 
		\par 
		For any $\left\lbrace \phi_m\right\rbrace\subset\mathcal{C}_{AB}$ and any probability distribution $\left\lbrace p_m \right\rbrace $, we construct a auxiliary system $S$ with orthonormal basis $\left\lbrace |m\rangle \right\rbrace_m$ and define $\chi_{\phi}^{ABS}=\sum_mp_m\chi_{\phi_m} \otimes|m\rangle\langle m|$. Notably, we have:
		\begin{eqnarray*}
			\sum_m(I_{|A||B|}\otimes|0\rangle\langle m|)^{\dagger}(I_{|A||B|}\otimes|0\rangle\langle m|)=I_{|A||B||S|}
		\end{eqnarray*}
		 and $\Theta^{ABS}\in\mathcal{RSC}_{ABSA'B'S}$ with $\Theta^{ABS}(\cdot)=\sum_m(I_{|A||B|}\otimes|0\rangle\langle m|)\cdot (I_{|A||B|}\otimes|0\rangle\langle m|)^{\dagger}$. Then,
		\begin{eqnarray*}
			&&\Theta^{ABS}(\chi_{\phi}^{ABS})
			\\
			&=&\sum_m(I_{|A||B|}\otimes|0\rangle\langle m|)\chi_{\phi}^{ABS} (I_{|A||B|}\otimes|0\rangle\langle m|)^{\dagger}
			\\
			&=&\sum_mp_m\chi_{\phi_m}\otimes|0\rangle\langle 0|.
		\end{eqnarray*}
		\par 
		From $\bm{(C_2)}$ and $\bm{(C_5)}$, the following inequality holds: 
		\begin{eqnarray*}
			&&\sum_mp_mC(\chi_{\phi_m})
			\\
			&=&C(\chi_{\phi}^{ABS})
			\\
			&\geqslant& C(\Theta^{ABS}(\chi_{\phi}^{ABS}))
			\\
			&=&C(\sum_mp_m\chi_{\phi_m}),
		\end{eqnarray*}
		that is, condition $\bm{(C_4)}$ holds.
	\end{proof}
	Therefore, from the above two theorems, we can give the Alternative framework for the imaginarity of quantum channels:
	\begin{theorem}\label{Th3}
		$\bm{(C_3)}+\bm{(C_4)}$ is equivalent to $\bm{(C_5)}$ when condition $\bm{(C_1)}$ and $\bm{(C_2)}$ holds. 
	\end{theorem}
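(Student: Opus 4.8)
The plan is to obtain Theorem~\ref{Th3} as an immediate corollary of Theorems~\ref{Th1} and~\ref{Th2}, which together supply the two implications of the claimed equivalence under the standing assumptions $\bm{(C_1)}$ and $\bm{(C_2)}$. Since both component theorems are already proved above, the work here is purely one of combining them, so I do not expect a substantive obstacle at this level.

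Concretely, I would first invoke Theorem~\ref{Th1}: assuming a quantifier $C$ satisfies $\bm{(C_1)}$, $\bm{(C_2)}$, $\bm{(C_3)}$, and $\bm{(C_4)}$, that theorem yields $\bm{(C_5)}$. Conversely, I would invoke Theorem~\ref{Th2}: assuming $C$ satisfies $\bm{(C_1)}$, $\bm{(C_2)}$, and $\bm{(C_5)}$, that theorem yields both $\bm{(C_3)}$ and $\bm{(C_4)}$. Chaining the two implications shows that, in the presence of $\bm{(C_1)}$ and $\bm{(C_2)}$, the conjunction $\bm{(C_3)}$ and $\bm{(C_4)}$ holds precisely when $\bm{(C_5)}$ holds, which is exactly the assertion of Theorem~\ref{Th3}. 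The only point requiring mild care is bookkeeping of hypotheses: one should check that the two directions genuinely share the same background conditions $\bm{(C_1)}$ and $\bm{(C_2)}$ and that neither secretly needs an extra assumption; inspecting the statements of Theorems~\ref{Th1} and~\ref{Th2} confirms this, so no additional condition is introduced.

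The genuinely nontrivial content lives upstream, not in Theorem~\ref{Th3} itself. In Theorem~\ref{Th1} the delicate steps are the construction of the block-diagonal real superchannel built from $Q_1,Q_2$ (to get $C(\phi)\geqslant p_1C(\tilde\phi_1)+p_2C(\tilde\phi_2)$ via $\bm{(C_3)}$) together with the reverse bound from $\bm{(C_4)}$, and then the argument via the isometry-type operators $\mathcal{K}_0^1,\mathcal{K}_n^2$ showing $C(\phi_j)=C(\tilde\phi_j)$, i.e.\ that zero-padding is resource-inert. In Theorem~\ref{Th2} the delicate steps are the auxiliary-system dilations $\Theta^{ABS}$ that encode a Kraus decomposition, respectively a convex mixture, as a single direct-sum channel so that $\bm{(C_5)}$ plus $\bm{(C_2)}$ reproduces $\bm{(C_3)}$ and $\bm{(C_4)}$. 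Once those constructions are granted, Theorem~\ref{Th3} follows formally, and I would present its proof as a two-line appeal to the preceding two theorems.
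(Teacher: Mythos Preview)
Your proposal is correct and matches the paper's approach exactly: the paper does not give a separate proof for Theorem~\ref{Th3} but simply states it after Theorems~\ref{Th1} and~\ref{Th2} with the remark ``Therefore, from the above two theorems, we can give the Alternative framework\ldots''. Your explicit bookkeeping of hypotheses is, if anything, more careful than what the paper provides.
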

	When verifying whether a quantifier is an imaginarity measure, we can choose to verify either the conditions $\bm{(C_1)}$--$\bm{(C_4)}$ or the conditions $\bm{(C_1)}+\bm{(C_2)}+\bm{(C_5)}$. 
	\begin{property}\label{Pr4}
		$C$ is a proper imaginarity measure for quantum channels, the equation holds:  
		\begin{eqnarray}\label{3}
			C(p_1\chi_{\phi_1}\oplus p_2\chi_{\phi_2})=C(p_1\chi_{\tilde{\phi}_1}+p_2\chi_{\tilde{\phi}_2})=C(\chi_{\phi}).
		\end{eqnarray}
	\end{property}
	\begin{proof}
		We employ the same idea as in Theorem \ref{Th1}, specifically using $\bm{(C_2)}$ to establish equivalence. Without loss of generality, we assume $|A|=3$, $p_1\phi_1(|j\rangle\langle k|)\triangleq X_{jk}$ and $p_2\phi_2(|j\rangle\langle k|)\triangleq Y_{jk}$. Then,
		\begin{eqnarray*}
			p_1\chi_{\phi_1}\oplus p_2\chi_{\phi_2}=
			\begin{pmatrix}
				X_{11} & X_{12} & X_{13} & & & \\
				X_{21} & X_{22} & X_{23}& & & \\
				X_{31} & X_{32} & a_{33} & & & \\
				& & & Y_{11} & Y_{12} & Y_{13} \\
				& & & Y_{21} & Y_{22} & Y_{23} \\
				& & & Y_{31} & Y_{32} & Y_{33}
			\end{pmatrix},
		\end{eqnarray*}
		and 
		\begin{eqnarray*}
			p_1\chi_{\tilde{\phi}_1}+p_2\chi_{\tilde{\phi}_2}=
			\begin{pmatrix}
				X_{11} & & X_{12} & & X_{13} & \\
				& Y_{11} & & Y_{12} & & Y_{13} \\
				X_{21} & & X_{22} & & X_{23} & \\
				& Y_{21} & & Y_{22} & & Y_{23} \\
				X_{31} & & X_{32} & & X_{33} & \\
				& Y_{31} & & Y_{32} & & Y_{33}
			\end{pmatrix}.
		\end{eqnarray*}
		In this way, we need to find some superchannels between $p_1\chi_{\phi_1}\oplus p_2\chi_{\phi_2}$ and $\chi_{\phi}$. We can set
		\begin{eqnarray*}
			K_1=
			\begin{pmatrix}
				I_{|B_1|} & & & & & \\
				& & & I_{|B_1|} & & \\
				& & & & I_{|B_1|} & \\
				& I_{|B_2|} & & & & \\
				& & I_{|B_2|} & & & \\
				& & & & & I_{|B_2|}
			\end{pmatrix}, 
			\\
			K_2=
			\begin{pmatrix}
				I_{|B_1|} & & & & & \\
				& I_{|B_2|} & & & & \\
				& & I_{|B_1|} & & & \\
				& & & I_{|B_2|} & & \\
				& & & & I_{|B_1|} & \\
				& & & & & I_{|B_2|}
			\end{pmatrix}. 
		\end{eqnarray*}
		It can be known that $K^{\dagger}_jK_j=I$, $j=1,2$, and the following transformations,
		\begin{eqnarray*}
			p_1\chi_{\phi_1}\oplus p_2\chi_{\phi_2}
			\xrightarrow{K_1(\cdot)K_1^{\dagger}}
			N
			\xrightarrow{K_2(\cdot)K_2^{\dagger}}
			p_1\chi_{\tilde{\phi}_1}+p_2\chi_{\tilde{\phi}_2},
		\end{eqnarray*}
		where $N$ is
		\begin{eqnarray*}
			\begin{pmatrix}
				X_{11} & & & X_{12} & X_{13} & \\
				& Y_{11} & Y_{12} & & & Y_{13} \\
				& Y_{21} & Y_{22} & & & Y_{23} \\
				X_{21} & & & X_{22} & X_{23} & \\
				X_{31} & & & X_{32} & X_{33} & \\
				& Y_{31} & Y_{32} & & & Y_{33}
			\end{pmatrix}. 
		\end{eqnarray*}
		We define $\Theta_j(\cdot)=K_j\cdot K_j^{\dagger}$, $j=1,2$. It is clear that $\Theta_j\in\mathcal{RSC}$. Then the inequality holds:
		\begin{eqnarray*}
			C(p_1\chi_{\phi_1}\oplus p_2\chi_{\phi_2})\geqslant C(p_1\chi_{\tilde{\phi}_1}+p_2\chi_{\tilde{\phi}_2}).
		\end{eqnarray*}
		Since $K_1$ and $K_2$ are invertible, the inverse inequality also holds. Therefore, we have:
		\begin{eqnarray*}
			C(p_1\chi_{\phi_1}\oplus p_2\chi_{\phi_2})= C(p_1\chi_{\tilde{\phi}_1}+p_2\chi_{\tilde{\phi}_2}).
		\end{eqnarray*}
		When the dimension $|A|$ takes other values, we can similarly construct many invertible real superchannels such that
		\begin{eqnarray*}
			\Theta_1(\cdots\Theta_k(p_1\chi_{\phi_1}\oplus p_2\chi_{\phi_2})\cdots)=p_1\chi_{\tilde{\phi}_1}+p_2\chi_{\tilde{\phi}_2}.
		\end{eqnarray*}
		In this way, we can conclude the equation (\ref{3}) holds.
	\end{proof}
	
	\section{Imaginarity measures of quantum channels}
	In this section, we introduce three imaginarity measures of quantum channels, that is, measures based on the robustness, the trace norm, and the sandwiched R\'enyi relative entropy, respectively.
	\subsection{Measure based on the robustness}
	\par 
	The Measure based on the robustness has appeared in many resource theories \cite{Hickey2018,Jin2021}. Similarly, we define the imaginarity measure of quantum channels as follows.
	\begin{eqnarray*}
		C_r(\phi)=\min_{\tilde{\phi}\in\mathcal{RC}_{AB}}\left\lbrace s\geqslant 0: \frac{\chi_{\phi}+s\chi_{\tilde{\phi}}}{1+s}\triangleq\chi_{\psi}, \psi\in\mathcal{RC}_{AB}\right\rbrace 
	\end{eqnarray*}
	\par 
	We have the following result.
	\begin{theorem}\label{Th4}
		$C_r(\phi)$ is an imaginarity measure of quantum channels.
	\end{theorem}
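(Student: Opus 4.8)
The plan is to verify the four defining conditions $\bm{(C_1)}$--$\bm{(C_4)}$ for $C_r$ directly; by Theorem~\ref{Th3} one could instead check $\bm{(C_1)}$, $\bm{(C_2)}$, $\bm{(C_5)}$, but $\bm{(C_3)}$ turns out to be the more convenient of the two routes here, and $\bm{(C_5)}$ then comes for free from Theorem~\ref{Th1}. Since $\mathcal{C}_{AB}$ and $\mathcal{RC}_{AB}$ are compact and the feasible set in the definition is nonempty (any sufficiently large $s$ works), the minimum is attained: there are a real channel $\psi$ and a channel $\tilde\phi$ with $\chi_\phi=(1+s)\chi_\psi-s\chi_{\tilde\phi}$ and $s=C_r(\phi)$. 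I will call this an optimal decomposition of $\phi$ and use it repeatedly.

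Conditions $\bm{(C_1)}$ and $\bm{(C_2)}$ are short. Nonnegativity is clear; $\phi\in\mathcal{RC}_{AB}$ gives $C_r(\phi)=0$ by taking $s=0$; and if $C_r(\phi)=0$ the optimal decomposition reads $\chi_\phi=\chi_\psi$ with $\psi$ real, so $\phi\in\mathcal{RC}_{AB}$. That is $\bm{(C_1)}$. For $\bm{(C_2)}$ I would take $\Theta\in\mathcal{RSC}_{ABA'B'}$ and apply its associated channel $\tilde\Theta$ to an optimal decomposition of $\phi$, obtaining $\chi_{\Theta(\phi)}=\tilde\Theta(\chi_\phi)=(1+s)\chi_{\Theta(\psi)}-s\chi_{\Theta(\tilde\phi)}$; since $\Theta$ is real, $\Theta(\psi)\in\mathcal{RC}_{A'B'}$ while $\Theta(\tilde\phi)\in\mathcal{C}_{A'B'}$, so this is an admissible decomposition for $\Theta(\phi)$ with the same parameter $s$, whence $C_r(\Theta(\phi))\leqslant s=C_r(\phi)$.

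The remaining two conditions follow from a rescaling argument. For $\bm{(C_4)}$, given optimal decompositions $\chi_{\phi_m}=(1+s_m)\chi_{\psi_m}-s_m\chi_{\tilde\phi_m}$ with $s_m=C_r(\phi_m)$, set $s=\sum_m p_m s_m$ and $\chi_{\tilde\phi}=\sum_m (p_m s_m/s)\chi_{\tilde\phi_m}$ (a convex combination of Choi states, hence a channel); then $\chi_{\sum_m p_m\phi_m}+s\chi_{\tilde\phi}=\sum_m p_m(1+s_m)\chi_{\psi_m}$, which is $(1+s)$ times a convex combination of real Choi states, so $C_r(\sum_m p_m\phi_m)\leqslant s=\sum_m p_m C_r(\phi_m)$. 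For $\bm{(C_3)}$, I would write $\Theta(\cdot)=\sum_m M_m\cdot M_m^\dagger$ and conjugate an optimal decomposition of $\phi$ by each $M_m$: with $r_m=\text{Tr}(M_m\chi_\psi M_m^\dagger)$, $q_m=\text{Tr}(M_m\chi_{\tilde\phi}M_m^\dagger)$ and the normalised branches $\chi_{\psi_m},\chi_{\tilde\phi_m}$, this gives $\chi_{\phi_m}=\frac{(1+s)r_m}{p_m}\chi_{\psi_m}-\frac{sq_m}{p_m}\chi_{\tilde\phi_m}$, whose coefficients sum to $1$; hence it is a valid robustness decomposition and $C_r(\phi_m)\leqslant sq_m/p_m$. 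Summing over $m$ and using $\sum_m q_m=\text{Tr}(\tilde\Theta(\chi_{\tilde\phi}))=\text{Tr}(\chi_{\tilde\phi})=1$ yields $\sum_m p_m C_r(\phi_m)\leqslant s=C_r(\phi)$.

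I expect the branchwise step of $\bm{(C_3)}$ to be the main obstacle: it uses that the Kraus operators $M_m$ of a real superchannel can be chosen so that each branch $M_m(\cdot)M_m^\dagger$ implements a legitimate selective superchannel and preserves realness, so that $\chi_{\psi_m}$ is again a real channel's Choi state and $\chi_{\tilde\phi_m}$ a channel's; this is the structure implicit in condition $\bm{(C_3)}$, but it is the point that must be pinned down, after which the computation closes precisely because $\sum_m q_m=\text{Tr}(\chi_{\tilde\phi})=1$. With $\bm{(C_1)}$--$\bm{(C_4)}$ in hand, $C_r$ is an imaginarity measure of quantum channels.
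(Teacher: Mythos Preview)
Your proposal is correct and follows essentially the same pseudomixture argument as the paper: the paper too verifies $\bm{(C_1)}$, $\bm{(C_3)}$, $\bm{(C_4)}$ (deriving $\bm{(C_2)}$ from the latter two rather than proving it directly as you do), applying the Kraus operators $M_m$ to an optimal decomposition $\chi_\phi=(1+s)\chi_\psi-s\chi_{\tilde\phi}$ for $\bm{(C_3)}$ and combining optimal decompositions convexly for $\bm{(C_4)}$, exactly as you outline. The branchwise-realness issue you flag as the main obstacle is glossed over in the paper as well; both arguments tacitly rely on the Kraus operators of a real superchannel being choosable real.
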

	\par 
	Before our proof, we need to introduce pseudomixture \cite{Sanpera1998,Napoli2016,Takagi2019,Xue2021} of $\chi_{\phi}$, that is, 
	\begin{eqnarray*}
		\chi_{\phi}=[1+C_r(\chi_{\phi})]\chi_{\phi_1}-C_r(\chi_{\phi})\chi_{\phi_2},
	\end{eqnarray*}
	where $\phi_1\in\mathcal{RC}_{AB}$, and $\phi_2$ is the optimal channel of the $\min$ in the definition of $C_r(\phi)$.
	\begin{proof}
		We only verify $\bm{(C_1)}$+$\bm{(C_3)}$+$\bm{(C_4)}$ since $\bm{(C_2)}$ can be derived by $\bm{(C_3)}$ and$\bm{(C_4)}$.
		\par\noindent
		$\bm{(C_1)}$: 
		\par 
		It can be observed that $C_r(\phi)\geqslant 0$. When $C_r(\phi)=0$, this implies $\frac{\chi_{\phi}+0}{1+0}=\chi_{\psi}$, which means $\phi=\psi\in\mathcal{RC}_{AB}$. The converse is also clearly true.
		\par\noindent
		$\bm{(C_3)}$: 
		\par 
		Let $\Theta\in\mathcal{RSC}_{ABA'B'}$ with $\Theta(\cdot)=\sum_m M_m\cdot M_m^\dagger$. For any quantum channel $\phi$ with its pseudomixture, under the requirement of $\bm{(C_3)}$, we have $p_m=\text{Tr}(M_m\chi_{\phi}M_m^{\dagger})$ and $\chi_{\phi_m}=M_m\chi_{\phi}M_m^{\dagger}/p_m$. Then, we obtain:
		\begin{eqnarray*}
			&&M_m\chi_{\phi}M_m^{\dagger}
			\\
			&=&[1+C_r(\chi_{\phi})]M_m\chi_{\phi_1}M_m^{\dagger}-C_r(\chi_{\phi})M_m\chi_{\phi_2}M_m^{\dagger}.
		\end{eqnarray*}
		\par 
		Let 
		\begin{eqnarray*}
			\begin{cases}
				\chi_{\phi_1}^m=\frac{1}{1+s_m}\frac{1}{p_m}[1+C_r(\chi_{\phi})]M_m\chi_{\phi_1}M_m^{\dagger},
				\\
				\chi_{\phi_2}^m=\frac{1}{s_m}\frac{1}{p_m}C_r(\chi_{\phi})M_m\chi_{\phi_2}M_m^{\dagger},
			\end{cases}
		\end{eqnarray*}
		where $s_m=\frac{1}{p_m}C(\chi_{\phi})\text{Tr}(M_m\chi_{\phi_1}M_m^{\dagger})$.
		\par 
		We obtain $\chi_{\phi_m}=(1+s_m)\chi_{\phi_1}^m-s_m\chi_{\phi_2}^m$. This means $\frac{\chi_{\phi_m}+s_m\chi_{\phi_2}^m}{1+s_m}=\chi_{\phi_1}^m$. Thus we get $C_r(\chi_{\phi_m})\leqslant s_m$ from definition, that is, $p_mC_r(\chi_{\phi_m})\leqslant C(\chi_{\phi})\text{Tr}(M_m\chi_{\phi_1}M_m^{\dagger})$.
		\par 
		Therefore, we have:
		\begin{eqnarray*}
			&&\sum_mp_mC_r(\chi_{\phi_m})
			\\
			&\leqslant&\sum_mC(\chi_{\phi})\text{Tr}(M_m\chi_{\phi_1}M_m^{\dagger})
			\\
			&=&C(\chi_{\phi})\sum_m\text{Tr}(M_m^{\dagger}M_m\chi_{\phi_1})
			\\
			&=&C(\chi_{\phi})\text{Tr}(\sum_mM_m^{\dagger}M_m\chi_{\phi_1})
			\\
			&=&C(\chi_{\phi})
		\end{eqnarray*}
		This means the condition $\bm{(C_3)}$ holds.
		\par\noindent
		$\bm{(C_4)}$: 
		\par 
		For simplicity, we consider only two quantum channels $\phi_1$ and $\phi_2$, and define $\phi=p\phi_1+(1-p)\phi_2$ with any $p\in (0, 1)$. This leads to $\chi_{\phi}=p\chi_{\phi_1}+(1-p)\chi_{\phi_2}$. Let $s_p=pC_r(\chi_{\phi_1})+(1-p)C_r(\chi_{\phi_2})$.
		\par 
		Consider the pseudomixtures of $\chi_{\phi_1}$ and $\chi_{\phi_2}$:
		\begin{eqnarray*}
			\chi_{\phi_1}=[1+C_r(\chi_{\phi_1})]\chi_{\phi_{1,1}}-C_r(\chi_{\phi_1})\chi_{\phi_{1,2}},
			\\ \chi_{\phi_2}=[1+C_r(\chi_{\phi_2})]\chi_{\phi_{2,1}}-C_r(\chi_{\phi_2})\chi_{\phi_{2,2}},
		\end{eqnarray*}
		then let
		\begin{eqnarray*}
			\begin{cases}
				\chi_{\psi}=\frac{p[1+C_r(\chi_{\phi_1})]\chi_{\phi_{1,1}}+(1-p)[1+C_r(\chi_{\phi_2})]\chi_{\phi_{2,1}}}{1+s_p},
				\\
				\chi_{\tilde{\phi}}=\frac{pC_r(\chi_{\phi_1})\chi_{\phi_{1,2}}+(1-p)C_r(\chi_{\phi_2})\chi_{\phi_{2,2}}}{s_p}.
			\end{cases}
		\end{eqnarray*}
		We have $\chi_{\phi}=(1+s_p)\chi_{\psi}-s_p\chi_{\tilde{\phi}}$, that is, $\chi_{\psi}=\frac{\chi_{\phi}+s_p\chi_{\tilde{\phi}}}{1+s_p}$, thus we obtain $s_p\geqslant C_r(\chi_{\phi})$.
		\par 
		Therefore, $pC_r(\chi_{\phi_1})+(1-p)C_r(\chi_{\phi_2})=s_p\geqslant C_r(\chi_{\phi})=C(p\chi_{\phi_1}+(1-p)\chi_{\phi_2})$. The condition $\bm{(C_4)}$ holds.
	\end{proof}
	
	\subsection{Measure based on the trace norm}
	\par 
	The trace norm, 
	\begin{eqnarray*}
		C_t(\phi)=||\chi_{\phi}-\chi_{\phi}^*||_{\text{Tr}}=\text{Tr}|\chi_{\phi}-\chi_{\phi}^*|.
	\end{eqnarray*}
	can be used to construct a measure, where $\chi_{\phi}^*$ means the conjugate of $\chi_{\phi}$ and $|\chi_{\phi}-\chi_{\phi}^*|=\sqrt{(\chi_{\phi}-\chi_{\phi}^*)^{\dagger}(\chi_{\phi}-\chi_{\phi}^*)}$. 
	\begin{theorem}\label{Th5}
		$C_t(\phi)$ is a proper imaginarity measure of quantum channels.
	\end{theorem}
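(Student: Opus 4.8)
The plan is to invoke Theorem~\ref{Th3} and verify $\bm{(C_1)}$, $\bm{(C_2)}$ and $\bm{(C_5)}$; one could equally mimic the proof of Theorem~\ref{Th4} for $C_r$ and check $\bm{(C_1)}+\bm{(C_3)}+\bm{(C_4)}$ instead, a route I comment on at the end. Faithfulness $\bm{(C_1)}$ is immediate: $C_t(\phi)=\text{Tr}|\chi_{\phi}-\chi_{\phi}^{*}|\geqslant 0$ is the trace norm of a Hermitian operator, and it vanishes exactly when $\chi_{\phi}=\chi_{\phi}^{*}$, that is, when $\chi_{\phi}$ is real, that is, when $\phi\in\mathcal{RC}_{AB}$.

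For $\bm{(C_5)}$ I would argue directly. If $\phi=p_1\phi_1\oplus p_2\phi_2$ with $|B|=|B_1|+|B_2|$, then reordering the basis $\{|j\rangle|\alpha\rangle\}$ so that the indices with $\alpha<|B_1|$ come before those with $\alpha\geqslant|B_1|$ amounts to conjugating $\chi_{\phi}$ by a real permutation matrix $P$ into block-diagonal form, $P\chi_{\phi}P^{\dagger}=p_1\chi_{\phi_1}\oplus p_2\chi_{\phi_2}$. Since $P$ is real it commutes with complex conjugation, so $P(\chi_{\phi}-\chi_{\phi}^{*})P^{\dagger}=p_1(\chi_{\phi_1}-\chi_{\phi_1}^{*})\oplus p_2(\chi_{\phi_2}-\chi_{\phi_2}^{*})$; the trace norm is unitarily invariant and additive over orthogonal blocks, whence $C_t(\phi)=p_1C_t(\phi_1)+p_2C_t(\phi_2)$, which is precisely $\bm{(C_5)}$.

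The substance lies in $\bm{(C_2)}$. Given $\Theta\in\mathcal{RSC}_{ABA'B'}$ with $\Theta(\cdot)=\sum_m M_m\cdot M_m^{\dagger}$, I would use — as is already done, implicitly, in the proof of Theorem~\ref{Th4} — that a real superchannel admits a representation with real operators $M_m$, and recall $\sum_m M_m^{\dagger}M_m=I$ from trace preservation of $\tilde{\Theta}$. Realness of the $M_m$ gives $\chi_{\Theta(\phi)}-\chi_{\Theta(\phi)}^{*}=\sum_m M_m(\chi_{\phi}-\chi_{\phi}^{*})M_m^{\dagger}$. Writing the Hermitian operator $X:=\chi_{\phi}-\chi_{\phi}^{*}$ as $X=X_{+}-X_{-}$ with $X_{\pm}\geqslant 0$, $X_{+}X_{-}=0$ and $|X|=X_{+}+X_{-}$, each term obeys $\|M_m X M_m^{\dagger}\|_{\text{Tr}}\leqslant\text{Tr}(M_m X_{+}M_m^{\dagger})+\text{Tr}(M_m X_{-}M_m^{\dagger})=\text{Tr}(M_m^{\dagger}M_m|X|)$; summing over $m$ and applying the triangle inequality gives $C_t(\Theta(\phi))=\|\sum_m M_m X M_m^{\dagger}\|_{\text{Tr}}\leqslant\text{Tr}|X|=C_t(\phi)$, which is $\bm{(C_2)}$.

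The main obstacle is precisely this last estimate, namely the trace-norm contraction $\sum_m\|M_m X M_m^{\dagger}\|_{\text{Tr}}\leqslant\|X\|_{\text{Tr}}$ for Hermitian $X$ with $\sum_m M_m^{\dagger}M_m=I$, together with the real-Kraus representation of a real superchannel; granting these, the rest is bookkeeping. The same two ingredients also carry the alternative route: $\bm{(C_4)}$ is the triangle inequality and homogeneity of the trace norm applied to $\chi_{\phi}-\chi_{\phi}^{*}=\sum_m p_m(\chi_{\phi_m}-\chi_{\phi_m}^{*})$, and $\bm{(C_3)}$ follows since $\sum_m p_m C_t(\phi_m)=\sum_m\|M_m X M_m^{\dagger}\|_{\text{Tr}}\leqslant\text{Tr}|X|=C_t(\phi)$ when $p_m=\text{Tr}(M_m\chi_{\phi}M_m^{\dagger})$ and $\chi_{\phi_m}=M_m\chi_{\phi}M_m^{\dagger}/p_m$, after which $\bm{(C_2)}$ is a consequence of $\bm{(C_3)}$ and $\bm{(C_4)}$ as in Theorem~\ref{Th4}.
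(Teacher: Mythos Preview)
Your proof is correct and follows the same route as the paper: verify $\bm{(C_1)}$, $\bm{(C_2)}$ and $\bm{(C_5)}$, then appeal to Theorem~\ref{Th3}. The only cosmetic differences are that the paper cites the standard trace-norm contraction under CPTP maps \cite{Nielsen2000} rather than reproving it via the Jordan decomposition of $X=\chi_\phi-\chi_\phi^*$, and invokes Property~\ref{Pr4} for the block-diagonal reduction in $\bm{(C_5)}$ rather than constructing the real permutation explicitly; both versions ultimately rest on the same fact $[\Theta(\chi_\phi)]^*=\Theta(\chi_\phi^*)$ for real superchannels, which, as you note, is used (here and in the paper) via a real Kraus representation.
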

	\begin{proof}
		The condition $\bm{(C_1)}$ holds due to the definition of the trace norm. 
		\par 
		Due to $||\chi_{\phi}-\chi_{\phi}^*||_{\text{Tr}}\geqslant ||\Theta(\chi_{\phi})-\Theta(\chi_{\phi}^*)||_{\text{Tr}}$ for any CPTP map \cite{Nielsen2000}, it certainly holds for real superchannels. Since $\Theta$ is a real superchannel, it can be seen that $[\Theta(\chi_{\phi})]^*=\Theta(\chi_{\phi}^*)$ due to 
		\begin{eqnarray*}
			&&[\Theta(\chi_{\phi})]^*
			\\
			&=&\left[ \sum_mM_m\chi_{\phi}M_n^{\dagger}\right]^*
			\\
			&=&\sum_mM_m\chi_{\phi}^*M_n^{\dagger}
			\\
			&=&\Theta(\chi_{\phi}^*).
		\end{eqnarray*}
		Thus $\bm{(C_2)}$ is satisfied. 
		\par 
		Under the requirement of $\bm{(C_5)}$, from Property \ref{Pr4}, we have:
		\begin{eqnarray*}
			&&C_t(\chi_{\phi})
			\\
			&=&C_t(p_1\chi_{\phi_1}\oplus p_2\chi_{\phi_2})
			\\
			&=&||(p_1\chi_{\phi_1}\oplus p_2\chi_{\phi_2})-(p_1\chi_{\phi_1}\oplus p_2\chi_{\phi_2})^*||_{\text{Tr}}
			\\
			&=&||(p_1\chi_{\phi_1}\oplus p_2\chi_{\phi_2})-(p_1\chi_{\phi_1}^*\oplus p_2\chi_{\phi_2}^*)||_{\text{Tr}}
			\\
			&=&||p_1(\chi_{\phi_1}-\chi_{\phi_1}^*)\oplus p_2(\chi_{\phi_2}-\chi_{\phi_2}^*)||_{\text{Tr}}
			\\
			&=&p_1||(\chi_{\phi_1}-\chi_{\phi_1}^*)||_{\text{Tr}}+p_2||(\chi_{\phi_2}-\chi_{\phi_2}^*)||_{\text{Tr}}
			\\
			&=&p_1C_t(\chi_{\phi_1})+p_2C_t(\chi_{\phi_2}).
		\end{eqnarray*}
		Therefore, $C_r(\phi)$ satisfies $\bm{(C_1)}$+$\bm{(C_2)}$+$\bm{(C_5)}$.
	\end{proof}
	
	\subsection{Measure based the sandwiched R\'enyi relative entropy}
	\par 
	The sandwiched R\'enyi relative entropy \cite{MuellerLennert2013} is defined for quantum states with the parameter $\alpha$ as follows:
	\begin{equation}
		D_{\alpha}(\rho||\sigma)=\frac{1}{\alpha-1}\text{log }\text{Tr}(\sigma^{\frac{1-\alpha}{2\alpha}}\rho\sigma^{\frac{1-\alpha}{2\alpha}})^{\alpha},
		\nonumber
	\end{equation}
	Now, we will introduce the imaginarity measure of quantum channels based on the sandwiched R\'enyi relative entropy, defined as:
	\begin{eqnarray*}
		C_{\alpha}(\phi)=1-\text{Tr}\left[(\chi_{\phi}^*)^{\frac{1-\alpha}{2\alpha}}\chi_{\phi}(\chi_{\phi}^*)^{\frac{1-\alpha}{2\alpha}}\right]^{\alpha},
	\end{eqnarray*}
	where $\frac{1}{2}\leqslant\alpha<1$.
	We have the following results:
	\begin{theorem}\label{Th6}
		$C_{\alpha}(\phi)$ is a proper imaginarity measure of quantum channels.
	\end{theorem}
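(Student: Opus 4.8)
The plan is to lean on the alternative framework: by Theorem~\ref{Th3} it suffices to verify $\bm{(C_1)}$, $\bm{(C_2)}$ and $\bm{(C_5)}$. The first move is to express the quantifier through the sandwiched R\'enyi relative entropy itself. Writing
\begin{eqnarray*}
Q_{\alpha}(\rho\|\sigma)=\text{Tr}\left(\sigma^{\frac{1-\alpha}{2\alpha}}\rho\,\sigma^{\frac{1-\alpha}{2\alpha}}\right)^{\alpha},
\end{eqnarray*}
one has $D_{\alpha}(\rho\|\sigma)=\frac{1}{\alpha-1}\log Q_{\alpha}(\rho\|\sigma)$, so that $C_{\alpha}(\phi)=1-Q_{\alpha}(\chi_{\phi}\|\chi_{\phi}^{*})=1-e^{(\alpha-1)D_{\alpha}(\chi_{\phi}\|\chi_{\phi}^{*})}$. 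Note that $\chi_{\phi}^{*}$ is again a density operator (same nonnegative spectrum and unit trace as $\chi_{\phi}$), and that for $\tfrac12\le\alpha<1$ the exponent $\tfrac{1-\alpha}{2\alpha}$ is nonnegative, so $Q_{\alpha}$ is finite with no support condition to worry about.

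For $\bm{(C_1)}$ I would quote the nonnegativity and faithfulness of $D_{\alpha}$ on normalized states ($D_{\alpha}(\rho\|\sigma)\ge 0$, with equality if and only if $\rho=\sigma$): since $\alpha-1<0$ this forces $0<e^{(\alpha-1)D_{\alpha}(\chi_{\phi}\|\chi_{\phi}^{*})}\le 1$, hence $C_{\alpha}(\phi)\ge 0$, vanishing exactly when $\chi_{\phi}=\chi_{\phi}^{*}$, i.e. $\phi\in\mathcal{RC}_{AB}$. For $\bm{(C_2)}$, take $\Theta\in\mathcal{RSC}_{ABA'B'}$ with $\Theta(\cdot)=\sum_{m}M_{m}\cdot M_{m}^{\dagger}$; the induced map $\tilde{\Theta}$ on Choi states is CPTP, and, exactly as in the proof of Theorem~\ref{Th5}, reality of the Kraus operators gives $\tilde{\Theta}(\chi_{\phi}^{*})=[\tilde{\Theta}(\chi_{\phi})]^{*}=\chi_{\Theta(\phi)}^{*}$. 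The data-processing inequality for the sandwiched R\'enyi relative entropy, valid precisely for $\alpha\ge\tfrac12$ \cite{MuellerLennert2013}, then yields $D_{\alpha}(\chi_{\phi}\|\chi_{\phi}^{*})\ge D_{\alpha}(\chi_{\Theta(\phi)}\|\chi_{\Theta(\phi)}^{*})$; multiplying by $\alpha-1<0$, exponentiating, and subtracting from $1$ converts this into $C_{\alpha}(\phi)\ge C_{\alpha}(\Theta(\phi))$.

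For $\bm{(C_5)}$, by Property~\ref{Pr4} it is enough to evaluate $C_{\alpha}$ on the block-diagonal state $p_{1}\chi_{\phi_{1}}\oplus p_{2}\chi_{\phi_{2}}$, whose conjugate is $p_{1}\chi_{\phi_{1}}^{*}\oplus p_{2}\chi_{\phi_{2}}^{*}$. The computation I expect to carry out is that for block-diagonal $\rho=p_{1}\rho_{1}\oplus p_{2}\rho_{2}$ and $\sigma=p_{1}\sigma_{1}\oplus p_{2}\sigma_{2}$,
\begin{eqnarray*}
\sigma^{\frac{1-\alpha}{2\alpha}}\rho\,\sigma^{\frac{1-\alpha}{2\alpha}}=\bigoplus_{i}p_{i}^{1/\alpha}\,\sigma_{i}^{\frac{1-\alpha}{2\alpha}}\rho_{i}\,\sigma_{i}^{\frac{1-\alpha}{2\alpha}},
\end{eqnarray*}
using $p_{i}^{\frac{1-\alpha}{\alpha}}\cdot p_{i}=p_{i}^{1/\alpha}$; raising to the power $\alpha$ (so the scalar $p_{i}^{1/\alpha}$ comes out as $p_{i}$) and tracing gives $Q_{\alpha}(\rho\|\sigma)=p_{1}Q_{\alpha}(\rho_{1}\|\sigma_{1})+p_{2}Q_{\alpha}(\rho_{2}\|\sigma_{2})$. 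Applying this with $\rho_{i}=\chi_{\phi_{i}}$, $\sigma_{i}=\chi_{\phi_{i}}^{*}$ and using $p_{1}+p_{2}=1$ yields $C_{\alpha}(\phi)=p_{1}C_{\alpha}(\phi_{1})+p_{2}C_{\alpha}(\phi_{2})$, which is $\bm{(C_5)}$.

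The main obstacle is not in the algebra but in the input: everything rests on the data-processing inequality for the sandwiched R\'enyi relative entropy, a nontrivial result that is known to hold only for $\alpha\ge\tfrac12$ — which is exactly why the statement restricts to $\tfrac12\le\alpha<1$. The complementary, milder points to get right are that conjugation commutes with a real superchannel (so that $\chi_{\phi}^{*}$ is mapped to $\chi_{\Theta(\phi)}^{*}$), which must be argued from the real Kraus representation as in Theorem~\ref{Th5}, and that the constraint $\alpha<1$ is what both keeps $Q_{\alpha}$ free of support issues and flips all the relevant inequalities into the direction needed for faithfulness and monotonicity.
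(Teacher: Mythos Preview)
Your proposal is correct and follows essentially the same route as the paper: verify $\bm{(C_1)}$ via nonnegativity/faithfulness of $D_{\alpha}$, $\bm{(C_2)}$ via the data-processing inequality for the sandwiched R\'enyi relative entropy combined with the commutation $\Theta(\chi_{\phi}^{*})=[\Theta(\chi_{\phi})]^{*}$ for real superchannels, and $\bm{(C_5)}$ via the block-diagonal computation $Q_{\alpha}(p_{1}\rho_{1}\oplus p_{2}\rho_{2}\,\|\,p_{1}\sigma_{1}\oplus p_{2}\sigma_{2})=p_{1}Q_{\alpha}(\rho_{1}\|\sigma_{1})+p_{2}Q_{\alpha}(\rho_{2}\|\sigma_{2})$. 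Your invocation of Property~\ref{Pr4} to pass to the globally block-diagonal Choi state is a shade more explicit than the paper's write-up, but the underlying argument is the same.
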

	\begin{proof}
		It is shown \cite{MuellerLennert2013} that $D_{\alpha}(\rho||\sigma)\geqslant 0$ when $\frac{1}{2}\leqslant\alpha<1$, and equation holds if and only if $\rho=\sigma$, which implies that $\text{Tr}(\sigma^{\frac{1-\alpha}{2\alpha}}\rho\sigma^{\frac{1-\alpha}{2\alpha}})^{\alpha}\leqslant 1$. Thus, considering Choi state as a quantum state, we find that $C_{\alpha}(\phi)\geqslant 0$. The equation holds if and only if $\chi_{\phi}=\chi_{\phi}^*$, satisfying the condition $\bm{(C_1)}$. 
		\par 
		From \cite{MuellerLennert2013}, we have:
		\begin{eqnarray*}
			&&D_{\alpha}(\chi_{\phi}||\chi_{\phi}^*)\geqslant D_{\alpha}\left[ \Theta(\chi_{\phi})||\Theta(\chi_{\phi}^*)\right]
			\\
			&\iff&\text{log }\text{Tr}\left[ (\chi_{\phi}^*)^{\frac{1-\alpha}{2\alpha}}\chi_{\phi}(\chi_{\phi}^*)^{\frac{1-\alpha}{2\alpha}}\right]^{\alpha}
			\\
			&&\leqslant\text{log }\text{Tr}\left\lbrace  [\Theta(\chi_{\phi}^*)]^{\frac{1-\alpha}{2\alpha}}\Theta(\chi_{\phi})[\Theta(\chi_{\phi}^*)]^{\frac{1-\alpha}{2\alpha}}\right\rbrace^{\alpha}
			\\
			&\iff&\text{Tr}\left[ (\chi_{\phi}^*)^{\frac{1-\alpha}{2\alpha}}\chi_{\phi}(\chi_{\phi}^*)^{\frac{1-\alpha}{2\alpha}}\right]^{\alpha}
			\\
			&&\leqslant\text{Tr}\left\lbrace  [\Theta(\chi_{\phi}^*)]^{\frac{1-\alpha}{2\alpha}}\Theta(\chi_{\phi})[\Theta(\chi_{\phi}^*)]^{\frac{1-\alpha}{2\alpha}}\right\rbrace^{\alpha}
			\\
			&\iff&\text{Tr}\left[ (\chi_{\phi}^*)^{\frac{1-\alpha}{2\alpha}}\chi_{\phi}(\chi_{\phi}^*)^{\frac{1-\alpha}{2\alpha}}\right]^{\alpha}
			\\
			&&\leqslant\text{Tr}\left\lbrace  ([\Theta(\chi_{\phi})]^*)^{\frac{1-\alpha}{2\alpha}}\Theta(\chi_{\phi})([\Theta(\chi_{\phi})]^*)^{\frac{1-\alpha}{2\alpha}}\right\rbrace^{\alpha}
			\\
			&\iff&C_{\alpha}(\phi)\geqslant C_{\alpha}\left[\Theta(\phi)\right].
		\end{eqnarray*}
		Thus $\bm{(C_2)}$ holds.
		\par 
		Under the requirement of $\bm{(C_5)}$, we have:
		\begin{eqnarray*}
			&&\text{Tr}\left[(\chi_{\phi}^*)^{\frac{1-\alpha}{2\alpha}}\chi_{\phi}(\chi_{\phi}^*)^{\frac{1-\alpha}{2\alpha}}\right]^{\alpha}
%			\\
%			&=&\text{Tr}\left\lbrace \left[(p_1\chi_{\phi_1}\oplus p_2\chi_{\phi_2})^*\right]^{\frac{1-\alpha}{2\alpha}}(p_1\chi_{\phi_1}\oplus p_2\chi_{\phi_2})\left[(p_1\chi_{\phi_1}\oplus p_2\chi_{\phi_2})^*\right] ^{\frac{1-\alpha}{2\alpha}}\right\rbrace^{\alpha}
			\\
			&=&\text{Tr}\left\lbrace \bigoplus_{j=1}^2 p_j^{\frac{1-\alpha}{2\alpha}}p_jp_j^{\frac{1-\alpha}{2\alpha}}(\chi_{\phi_j}^*)^{\frac{1-\alpha}{2\alpha}}\chi_{\phi_j}(\chi_{\phi_j}^*)^{\frac{1-\alpha}{2\alpha}}\right\rbrace^{\alpha}
			\\
			&=&\sum_{j=1}^{2}p_j\text{Tr}\left[ (\chi_{\phi_j}^*)^{\frac{1-\alpha}{2\alpha}}\chi_{\phi_j}(\chi_{\phi_j}^*)^{\frac{1-\alpha}{2\alpha}}\right]^{\alpha}
		\end{eqnarray*}
		In this way, we obtain:
		\begin{eqnarray*}
			&&C_{\alpha}(\phi)
			\\
			&=&1-\text{Tr}\left[(\chi_{\phi}^*)^{\frac{1-\alpha}{2\alpha}}\chi_{\phi}(\chi_{\phi}^*)^{\frac{1-\alpha}{2\alpha}}\right]^{\alpha}
			\\
			&=&\sum_{j=1}^{2}p_j\left\lbrace 1-\text{Tr}\left[ (\chi_{\phi_j}^*)^{\frac{1-\alpha}{2\alpha}}\chi_{\phi_j}(\chi_{\phi_j}^*)^{\frac{1-\alpha}{2\alpha}}\right]^{\alpha}\right\rbrace 
			\\
			&=&p_1C_{\alpha}(\phi_1)+p_2C_{\alpha}(\phi_2)
		\end{eqnarray*}
		The condition $\bm{(C_5)}$ holds.
	\end{proof}
	\section{Conclusion}
	\par 
	We introduce the framework of imaginarity of quantum channels, and introduce an alternative framework. Besides, we give some properties of imaginarity measure of quantum channels. We also give three imaginarity measures for quantum channels. They are $C_r$, $C_t$ and $C_{\alpha}$ based on the robustness, the trace norm and the sandwiched R\'enyi relative entropy, respectively.
	
	\section*{Acknowledgement}
	\par 
	We thank Jianwei Xu for helpful discussions and comments. This work was supported by National Natural Science Foundation of China (Grants No. 12271474).


\begin{thebibliography}{10}
		
		\bibitem{Hickey2018}
		A. Hickey and G. Gour.
		\newblock Quantifying the Imaginarity of Quantum Mechanics.
		\newblock {\em Journal of Physics A: Mathematical and Theoretical}, \textbf{51}, 414009(2018).
		
		\bibitem{Wu2021}
		K. D. Wu, T. V. Kondra, S. Rana, C. M. Scandolo, G. Y. Xiang, C. F. Li, G. C. Guo and A. Streltsov.
		\newblock Operational Resource Theory of Imaginarity.
		\newblock {\em Physical Review Letters}, \textbf{126}, 090401(2021).
		
		\bibitem{Wu2023}
		K. D. Wu, T. V. Kondra, C. M. Scandolo, S. Rana, G. Y. Xiang, C. F. Li, G. C. Guo and A. Streltsov.
		\newblock Resource Theory of Imaginarity: New Distributed Scenarios.
		\newblock {\em arXiv:2301.04782}, (2023).
		
		\bibitem{Chen2022}
		Q. Chen, T. Gao and F. L. Yan.
		\newblock Measures of imaginarity and quantum state order.
		\newblock {\em Science China Physics, Mechanics \& Astronomy}, \textbf{66}, 1(2023).
		
		\bibitem{Xu2023}
		J. W. Xu.
		\newblock Quantifying the imaginarity of quantum states via tsallis relative entropy.
		\newblock {\em arXiv:2311.12547}, (2023).
		
		\bibitem{Chen2024}
		X. Y. Chen and Q. Lei.
		\newblock Imaginarity measure induced by relative entropy.
		\newblock {\em arXiv:2404.00637}, (2024).
		
		\bibitem{Xue2021}
		S. N. Xue, J. Guo, P. Li, M. F. Ye and Y. M. Li.
		\newblock Quantification of resource theory of imaginarity.
		\newblock {\em Quantum Information Processing}, \textbf{20}, 383(2021).
		
		\bibitem{Mani2015}
		A. Mani and V. Karimipour.
		\newblock Cohering and decohering power of quantum channels.
		\newblock {\em Physical Review A}, \textbf{92}, 032331(2015).
		
		\bibitem{Dana2017}
		K. B. Dana, M. G. Díaz, M. Mejatty, and A. Winter.
		\newblock Resource theory of coherence: Beyond states.
		\newblock {\em Physical Review A}, \textbf{95}, 062327(2017).
		
		\bibitem{Leditzky2018}
		F. Leditzky, E. Kaur, N. Datta and M. M. Wilde.
		\newblock Approaches for approximate additivity of the Holevo information of quantum channels.
		\newblock {\em Physical Review A}, \textbf{97}, 012332(2018).
		
		\bibitem{Yuan2019}
		X. Yuan.
		\newblock Hypothesis testing and entropies of quantum channels.
		\newblock {\em Physical Review A}, \textbf{99}, 032317(2019).
		
		\bibitem{Wang2019}
		X. Wang, M. M. Wilde and Y Su.
		\newblock Quantifying the magic of quantum channels.
		\newblock {\em New Journal of Physical}, \textbf{21}, 103002(2019).
		
		\bibitem{Liu2019}
		Z. W. Liu and A. Winter.
		\newblock Resource theories of quantum channels and the universal role of resource erasure.
		\newblock {\em arXiv:1904.04201}, (2019).
		
		\bibitem{Xu2019}
		J. W. Xu.
		\newblock Coherence of quantum channels.
		\newblock {\em Physical Review A}, \textbf{100}, 052311(2019).
		
		\bibitem{Zhou2022}
		H. Q. Zhou, T. Gao and F. L. Yan.
		\newblock Quantifying the entanglement of quantum channel.
		\newblock {\em Physical Review A}, \textbf{4}, 013200(2022).
		
		\bibitem{Xu2021}
		J. W. Xu.
		\newblock Coherence of quantum Gaussian channels.
		\newblock {\em Physical Letters A}, \textbf{387}, 127028(2021).
		
		\bibitem{Luo2022}
		Y. Luo, M. F. Ye and Y. M. Li. 
		\newblock Coherence weight of quantum channels.
		\newblock {\em Physica A}, \textbf{599}, 127510(2022).
		
		\bibitem{Jin2021}
		Z. X. Jin, L. M. Yang, S. M. Fei, X. Li-Jost, Z. X. Wang, G. L. Long and C. F. Qiao.
		\newblock Maximum relative entropy of coherence for quantum channels.
		\newblock {\em  Science China Physics, Mechanics \& Astronomy}, \textbf{64}, 280311(2021).
		
		\bibitem{Fan2022}
		Y. J. Fan, X. Guo and X. Y. Yang.
		\newblock Quantifying coherence of quantum channels via trace distance.
		\newblock {\em  Quantum Information Processing}, \textbf{21}, 339(2022).
		
		\bibitem{Ye2024}
		M. F. Ye, Y. Luo and Y. M. Li.
		\newblock Quantifying channel coherence via the norm distance.
		\newblock {\em Journal of Physics A: Mathematical and Theoretical}, \textbf{57}, 015307(2024).
		
		\bibitem{Fan2024}
		J. R. Fan, Z. Q. Wu and S. M. Fei.
		\newblock Quantifying coherence of quantum channels based on the generalized $\alpha$--$z$--relative R\'enyi entropy.
		\newblock {\em Quantum Information Processing}, \textbf{23}, 100(2024).
		
		\bibitem{Yu2016}
		X. D. Yu, D. J. Zhang, G. F. Xu and D. M. Tong.
		\newblock Alternative framework for quantifying coherence.
		\newblock {\em Physical Review A}, \textbf{94}, 060302(2016).
		
		\bibitem{Kong2022}
		S. Y. Kong, Y. J. Wu, Q. Q. L, Z. X. Wang and S. M. Fei.
		\newblock An Alternative Framework For Quantifying Coherence Of Quantum Channels.
		\newblock {\em International Journal of Theoretical Physics}, \textbf{61}, 113(2022).
		
		\bibitem{Qiang2022}
		Q. Chen.
		\newblock The research of measures of imaginarity under imaginarity resource theory.
		\newblock M.S. thesis.
		\newblock Hebei Normal University, Hebei, China, (2022).
		
		\bibitem{Zanoni2024}
		E. Zanoni and C. M. Scandolo.
		\newblock Choi-Defined Resource Theories.
		\newblock {\em arXiv:2402.12569}, (2024).
		
		\bibitem{Chiribella2008}
		G. Chiribella, G. M. D'Ariano and P. Perinotti.
		\newblock Transforming quantum operations: Quantum supermaps.
		\newblock {\em Europhysics Letters}, \text{83}, 30004(2008).
		
		\bibitem{Sanpera1998}
		A. Sanpera, R. Tarrach and G. Vidal.
		\newblock Local description of quantum inseparability.
		\newblock {\em Physical Review A}, \textbf{58}, 826(1998).
		
		\bibitem{Napoli2016}
		C. Napoli, T. R. Bromley, M. Cianciaruso, M. Piani, N. Johnston and G. Adesso.
		\newblock Robustness of Coherence: An Operational and Observable Measure of Quantum Coherence inseparability.
		\newblock {\em Physical Review Letters}, \textbf{116}, 150502(2016).
		
		\bibitem{Takagi2019}
		R. Takagi and B. Regula.
		\newblock General Resource Theories in Quantum Mechanics and Beyond: Operational Characterization via Discrimination Tasks.
		\newblock {\em Physical Review X}, \textbf{9}, 826(2019).
		
		%		\bibitem{Umegaki1962}
		%		H. Umegaki.
		%		\newblock Conditional expectation in an operator algebra. IV. Entropy and information.
		%		\newblock {\em Kodai Mathematical Seminar Reports}, \textbf{14}, 59(1962).
		
		\bibitem{Nielsen2000}
		M. A. Nielsen, I. L. Chuang.
		\newblock Quantum Computation and Quantum Information.
		\newblock {\em Cambridge University Press}, (2000).
		
		%		\bibitem{Audenaert2015}
		%		K. M. R. Audenaert and N. Datta.
		%		\newblock $\alpha$-z-R{\'e}nyi relative entropies.
		%		\newblock {\em Journal of Mathematical Physics}, \textbf{56}, 022202(2015).
		
		\bibitem{MuellerLennert2013}
		M. M{\"u}ller-Lennert, F. Dupuis, O. Szehr, S. Fehr and M. Tomamichel.
		\newblock On quantum r\'enyi entropies: a new definition, some properties and several conjectures.
		\newblock {\em Journal of Mathematical Physics}, \textbf{54}, 122203(2013).
		
	\end{thebibliography}
\end{document}